\def\nit{\hbox{\it I\hskip -2pt N}}
\def\classP{$\boldsymbol P$ }
\def\classNP{$\boldsymbol N \! \boldsymbol P$ }
\newtheorem{theorem}{Theorem}[section]
\newtheorem{definition}[theorem]{Definition}
\begin{document}


\title[The \classP - \classNP question and the pseudo-randomness of \classNP problems]{About the impossibility to prove \classP $\neq$ \classNP \sout{or \classP = \classNP} and the pseudo-randomness in \classNP}


\author[M.R\'emon]{Prof. Marcel R\'emon\authorinfo{%
M.R\'emon, Department of Mathematics,
      Namur University, Belgium;
        \mbox{marcel.remon@unamur.be}}
      }

\maketitle

\begin{abstract}
\noindent The relationship between the complexity classes \classP and
\classNP is an unsolved question in the field of theoretical computer science.
In this paper, we look at the
link between the \classP - \classNP question and the
{\it ``Deterministic''} versus {\it ``Non Deterministic''} nature of a
problem, and more specifically at the temporal nature of the complexity within the \classNP class of problems. Let us remind that the \classNP class is called the class of
{\it ``Non
Deterministic Polynomial''} languages.
Using the meta argument that
results in Mathematics should be {\it ``time independent''} as they are reproducible, the
paper shows that the \classP $\neq$ \classNP assertion is  impossible
to prove in the {\it a-temporal} framework of
Mathematics. {\bf In a previous version of the report, we use a similar
argument based on randomness to show that the \classP = \classNP
assertion was also impossible to prove, but this part of the paper was
shown to be incorrect.  So, this version deleted it.}   In fact, this paper highlights the time dependence of the complexity for any \classNP problem, linked to some pseudo-randomness in its heart.   

\end{abstract}

\keywords{Algorithm Complexity, Non Deterministic Languages, \classP $\! \! - \! \!$ \classNP problem,  3-CNF-SAT problem}



\section{Introduction}

\subsection{The class  \classP of languages}
\noindent A decision problem is a problem that takes as
input some string, and outputs "yes" or "no". If there is an algorithm
(say a Turing machine, or a computer program with unbounded memory)
which is able to produce the correct answer for any input string of
length $n$ in at most $c \; n^k$ steps, where $k$ and $c$ are constants independent of the input string, then we say that the problem can be solved in polynomial time and we place it in the class \classP$\!\!$.  \\[12pt]
\noindent More formally, \classP is defined as the set of all languages
which can be decided by a deterministic polynomial-time Turing
machine.  Here we follow the framework proposed by Stephen \cite{Cook2000}.  Let $\Sigma$ be a finite alphabet with at least two elements, and let $\Sigma^*$ be the set of finite strings over $\Sigma$.  Then a language over $\Sigma$ is a subset $L$ of $\Sigma^*$.  Each Turing Machine $M$ has an associated input alphabet $\Sigma$.  For each string $w$  in $\Sigma^*$, there is a computation associated with $M$, with input $w$.  We say that $M$ {\it accepts $w$} if this computation terminates in the accepting state ``{\it Yes}''. Note that $M$ fails to accept $w$ either if this computation ends in the rejecting state ``{\it No}'', or if the computation fails to terminate.  \\[12pt]
\noindent The {\it language accepted by} $M$, denoted $L(M)$, has associated alphabet $\Sigma$ and is defined by
\[ L(M) = \{ w \in \Sigma^* | M \mbox{ accepts } w \} \]

\noindent We denote by $t_M(w)$ the number of steps in the computation of $M$ on input $w$.  If this computation never halts, then $t_M(w) = \infty$.  For $n \in \nit$, we denote by $T_M(n)$ the worst case run time of $M$; that is 
\[T_M(n) = \max \{t_M(w) | w \in \Sigma^n\} \]
where $\Sigma^n$ is the set of all strings over $\Sigma$ of length $n$.  We say that $M$ {\it runs in polynomial time} if :
\[ \exists k \; \in \nit \; \mbox{ such that } \; \; \{ \forall n :  T_M(n) \leq n^k + k\;\;\} \]  
\begin{definition} We define the class ${\boldsymbol P}$ of languages by 
\begin{eqnarray*}
{\boldsymbol P} & = & \{ L | L = L(M) \mbox{ for a machine } M \mbox{ which runs in polynomial time} \} 
\end{eqnarray*}
\end{definition}
\subsection{The class  \classNP of languages}
\noindent The notation ${\boldsymbol N \! \boldsymbol P}$ stands for {\it non deterministic polynomial time}, since originally ${\boldsymbol N \! \boldsymbol P}$ was defined in terms of non deterministic machines.  However, it is customary to give an equivalent definition using the notion of a {\it checking relation}, which is simply a binary relation $R \subseteq \Sigma^* \times \Sigma_1^*$ for some finite alphabets $\Sigma$ and $\Sigma_1$.  We associate with each such relation $R$ a language $L_R$ over 
$\Sigma \cup \Sigma_1 \cup \{\#\}$ defined by
\[ L_R = \{ w \# y | R(w,y) \} \]
where the symbol $\#$ is not in $\Sigma$.  We say that $R$ is {\it polynomial-time} iff $L_R \in \boldsymbol P$. 

\begin{definition} We define the class $\boldsymbol N \! \boldsymbol P$ of languages by the condition that a language $L$ over $\Sigma$ is in 
$\boldsymbol N \! \boldsymbol P$ iff there is $k \in \nit$ and a polynomial-time checking relation $R$ such that for all $w \in \Sigma^*$,
\[ w \in L \Leftrightarrow \exists y ( |y| \leq |w|^k \mbox{ and } R(w,y) ) \]
where $|w|$ and $|y|$ denote the lengths of $w$ and $y$, respectively.  We say that $y$ is {\it a certificate} associated to $w$.
\end{definition}
\subsection{The \classP - \classNP question}
\noindent The {\it ``\classP versus \classNP problem''}, i.e. the
question whether \classP = \classNP or \classP $\neq$ \classNP, is an open question and is
the core of this paper.  See \cite{Sipser92}
 for the history of the question.  Here, we show that neither \classP = \classNP nor \classP $\neq$ \classNP can be proved in the {\it ``a-temporal''} framework of Mathematics where results should always be reproducible.  We link this assertion to the existence of some pseudo-random part in the heart of any \classNP problem.
\subsection{An example of \classNP problem : the 3-CNF-satisfiability problem}
\noindent {\it Boolean formulae} are built in the usual way from
propositional variables $x_i$ and the logical connectives $\wedge$, $\vee$ and $\neg$, which are interpreted as conjunction, disjunction, and negation, respectively.  A {\it literal} is a propositional variable or the negation of a propositional variable, and a {\it clause} is a disjunction of literals.  A Boolean formula is {\it in conjunctive normal form} iff it is a conjunction of clauses. \\[12pt]
\noindent A {\it 3-CNF formula} $\varphi$ is a Boolean formula in conjunctive normal form with exactly three literals per clause, like $\varphi := (x_1 \vee x_2 \vee \neg x_3) \wedge (\neg x_2 \vee  x_3 \vee \neg x_4) := \psi_1 \wedge \psi_2 $.  The {\it 3-CNF-satisfiability or 3-CNF-SAT problem} is to decide whether there exists or not logical values for the literals so that $\varphi$ can be true (on the previous example, $\varphi = 1 $(True) if  $x_1 = \neg x_2  = 1)$.   \\[12pt]
\noindent Until now, nobody knows whether or not it is possible to check the satisfiability of any given {\it 3-CNF} formula $\varphi$ in a polynomial time, as the {\it 3-CNF-SAT} problem is known to belong to the class \classNP of problems. See \cite{cormen2001} for details.  \\[12pt]
\noindent Let us give some general properties of the {\it 3-CNF} formulae. \\[12pt]
\noindent The {\it size} $s$ of a 3-CNF formula $\varphi$ is defined
as the size of the corresponding {\it Boolean circuit}, i.e. the
number of logical connectives in  $\varphi$.  Let us note
the following property of the size $s$ :
\begin{eqnarray}
 s = {\cal O}(m) = {\cal O}(n^3)
\label{no0}
\end{eqnarray}
\noindent  where
 $n$ is the number of propositional
variables $x_i$ and $m$ the number of clauses in $\varphi$.  Indeed,  
\[ \frac{n}{3} \leq m \leq 2^3 \frac{n (n-1) (n-2)}{3
  \times 2} \mbox{\hspace{0,5cm} and \hspace{0,5cm}} (3 m - 1) \leq s
\leq (6 m - 1) \]
%
%
\noindent as there is a maximum of $2^3 \times C_3^n$
possible clauses which corresponds to the choice of 3 different
variables among $n$, each of them being in an affirmative or negative
state. Note that $s = 3 m -1$ when there is no ``$\neg$'' in $\varphi$ [$m \times 2$ logical connectives ``$\vee$'' for the $\psi_i$ and $m-1$ ``$\wedge$'' as conjonctions] and 
$s = 6 m -1$ when all the litterals in $\varphi$ are in a negative form.\\[12pt]
\noindent In this paper, we define the {\it dimension} $d$ of a 3-CNF
formula as $(n, m)$.  And we represent any 3-CNF formula by a matrix ${\cal
  A}$ of size $2 n
\times m$.  The {\it signature $u_i$ of a clause $\psi_i$} is defined as the value
of the
binary number corresponding to the row in the matrix.  The {\it
  signature of a formula} is the ordered vector of these clause's signatures 
: $\varphi_{n,m} \approx (u_1,u_2,\cdots,u_m)$ with 
$21 \leq u_i \leq 21 \cdot 2^{2n-5}$ and $u_i > u_j$ for $i <j$. See {\tt Table I.} 
\\[6pt]
%
\noindent 
\begin{table}[htb]
\begin{center}
{\small
\begin{tabular}{rcl|cc|cc|cc|cc|c|c|}
\cline{4-11} 
& & & \multicolumn{8}{|c|}{3-CNF formula $\varphi$ ({\it dimension} $d
  = (4, 3) $)} &  \multicolumn{2}{c}{} \\
\cline{4-11} \cline{13-13}
& & & $x_1$ & $\neg x_1$ & $x_2$ & $\neg x_2$ & $x_3$
& $\neg x_3$ & $x_4$ & $\neg x_4$ & &  $u_i$
\\
\cline{4-11} \cline{13-13}
$\psi_1 $ :& $( x_1 \vee x_2 \vee \neg x_3 )$ & & 1 & 0 & 1 & 0 & 0 &
1 & 0 & 0 & & \parbox[c][14pt]{12pt}{$164$ } \\
$\wedge \; \; \psi_2$ : & $( \neg x_2 \vee x_3 \vee \neg x_4) $ &
$\Leftrightarrow$  & 0
& 0 & 0 & 1 & 1 & 0 & 0 & 1 & & \parbox[c][14pt]{12pt}{$25$}  \\
$\wedge \; \; \psi_3$ :  & $( \neg x_1 \vee \neg x_3 \vee x_4) $ & & 0 & 1 & 0 & 0 & 0 & 1
& 1 & 0 & & \parbox[c][14pt]{12pt}{$70$} \\
\cline{4-11} \cline{13-13}
\end{tabular}
}
\end{center}
\mbox{}\\[6pt]
\caption{Example of matrix representation and signatures of a 3-CNF formula.}
\end{table}
\mbox{}\vspace{15pt}\\
\newpage
\noindent There are  $2^3 \times C_3^n$
possible clauses with $n$ variables.  A 3-CNF formula with {\it dimension} $(k,m)$ with $k \leq n$ is composed of $m$ different clauses drawn from
the $2^3 \times C_3^n$ possible clauses.  So, the total number of such
formulae is 
\begin{eqnarray}
C_m^{2^3 \times C_3^n} = \frac{ (2^3 \times C_3^n)!}{m! \times (2^3
  \times C_3^n - m)!} = {\cal O}(n^{3m})
\label{no1}
\end{eqnarray}
\noindent Let $\Phi_{n,m}$ denote the set of all these formulae :  
\[ \Phi_{n,m} = \{\varphi : \varphi \mbox{ is a 3-CNF
    formula of {\it dimension} $(k, m) $ with $ k \leq n$ \}   } \]
%
%

\noindent 
The 3-CNF-Satisfiability
problem is to find  a function $\Xi$ : 
\begin{eqnarray}
 \Xi : & \Phi_{n,m} & 
\longrightarrow  \{0,1\} \label{no3} \\
& \varphi & \; \leadsto  \mbox{\hspace{0.4cm} $0$ if $\varphi$ is non satisfiable and
  $1$ otherwise }  \nonumber
\end{eqnarray}

\noindent The 3-CNF-Satisfiability problem is known to belong to the
\classNP class. 
\mbox{}\\[3pt]
\section{A ``Meta Mathematical'' proof that \classP $\neq$ \classNP is impossible to prove} 
\mbox{}\\[3pt]
\noindent One way to prove that \classP $\neq$ \classNP  is to show that the complexity measure  $T_{M}(n)$ for some \classNP problem, like the 3-CNF-SAT problem, cannot be reduced to a polynomial time.  We will show that the 3-CNF-SAT problem behaves as a common {\it safe problem} and that its complexity is time dependent.  In fact, at some specific time  $t_0+\Delta t$, the 3-CNF-SAT problem will be of polynomial complexity. So, \classP $\neq$ \classNP will not be provable, as $T_{M}(n)$ is not {\it ``always''} supra-polynomial.
\subsection{The analogy with the {\it safe problem} and the time dependent nature of complexity} 
%
%
\noindent Finding whether or not a given 3-CNF formula $\varphi$ is satisfiable is like being in front of a {\it safe}, trying to find the {\it opening combination}.  One has to try any possible value (0 or 1) for the variable $x_i$ in $\varphi$ to see whether some combination satisfies $\varphi$, in the same way as one tries any combination to get the one, if it exists, that opens the safe. \\[12pt]
\noindent Let us consider more deeply the analogy between the 3-CNF-SAT problem and the {\it safe problem}, especially by looking to the {\it time dependent} nature of the complexity involved here.  It is clear that when you are in front of a safe for the first time, it is a very hard problem, as you do not have any information about the correct opening combination.  In fact, in the worst case, it takes an exponential time to find it.  But as soon as  you have succeeded in opening the safe (or in finding that there is no solution), the problem becomes trivial.  It takes only one operation to open the safe or to declare it impossible to open.\\[12pt]
\noindent Let us denote by $t_0$ the first time you try to open the safe, and by $\Delta
t$ the time needed to find the solution.  Let us remark that $\Delta t$ can be
huge but it is always finite as the number of possible combinations is finite.  Now we compute the complexity measure $T_{safe}(n)$ for
 the {\it safe problem} at $t_0$ and $t_0 + \Delta t$.\\[12pt]
\noindent In $t_0$, one has to test all possible combinations.  If the safe has $n$ buttons with only two positions (0 or 1), there will be $2^n$ possibilities.  Because no information is available about the solution, there is no way to reduce the number of cases to be tested.  The exponential complexity of the problem comes from the total lack of information about the solution.  This absence of information is strictly related to {\it the random nature of the problem} : the finding of the opening combination is a random search process for anyone in front of the safe, at least in $t_0$.  So, we get 
\[ T_{safe, \; t_0}(n) = 2^n \]
\noindent But after $\Delta t$, the correct opening combination is known {\it forever}, and the complexity measure is now 
\[ T_{safe, \; t_0 + \Delta t}(n) = 1 \]
\noindent As one can see, the complexity measure $T_{safe}(n)$ for the safe problem is {\it time dependent}.\\[12pt]
\noindent The same occurs for the 3-CNF-SAT problem as well as for any \classNP problem.
Their complexity measure changes in time.  The idea of this section about the
impossibility to prove \classP $\neq$ \classNP is to show that, even if
$T_{3-CNF-SAT, \; t_0}(n)$ is not known (exponential or polynomial ?), there
exists some $\Delta t$, even huge, such that the complexity measure is polynomial in $t_0 + \Delta t$.  
\subsection{The Computation of $T_{3-CNF-SAT, \; t_0+\Delta t}(n)$}

\noindent Let us take $\Delta t$ large enough so that $\Xi$ [the 3-CNF-SAT
decision function, see equation (\ref{no3})] is known for all the
3-CNF formulae in $\Phi_{n,m}$.  $\Delta t$ exists and is finite.  In the analogy with the safe problem, it corresponds to the time needed to find the solution for all safe equipments of dimension $n$.
 Until now, we do not know whether $\Xi$ can be computed  in
polynomial time or not, but this only changes the size of $\Delta t$.\\[12pt]
\noindent The output of $\Xi$ is the set ${\cal S}_{n,m}$ of all satisfiable 3-CNF
formulae of $\Phi_{n,m}$, or equivalently $\overline{\cal S}_{n,m} = \Phi_{n,m}
\setminus {\cal S}_{n,m}$, the set of all non satisfiable 3-CNF formulae. As equation
(\ref{no1}) shows, $\overline{\cal S}_{n,m}$
contains at most ${\cal O}(n^{3m})$ 
elements.  The worst case occurs when $m = (2^3 \times C_3^n)/2 =
{\cal O}(n^3)$.
 As $\overline{\cal S}_{n,m} \subseteq \Phi_{n,m}$, the
equation (\ref{no1}) gives us the following result : 
\begin{eqnarray}
\#\{\overline{\cal S}_{n,m}\} < \#\{\Phi_{n,m}\} = {\cal O}(n^{3(n^3)})
\hspace{0.3cm} \Rightarrow  \hspace{0.3cm} \#\{\overline{\cal S}_{n,m}\} = {\cal O}(2^{n^3}) \hspace{0.3cm} \mbox{ as $n^3 > 2$}
\label{no4bis}
\end{eqnarray} 
\noindent See Figure \ref{figure_4_m} for an example of $\# \{\Phi_{n,m}\}$ and $\# \{\overline{\cal S}_{n,m} \}$ with $n = 4$.  The figure shows that $\# \{\Phi_{n,m}\}$ and $\# \{\overline{\cal S}_{n,m} \}$ behaves similarly.\\[12pt] 
\begin{figure}[htb]
\centerline{\includegraphics[width=12cm]{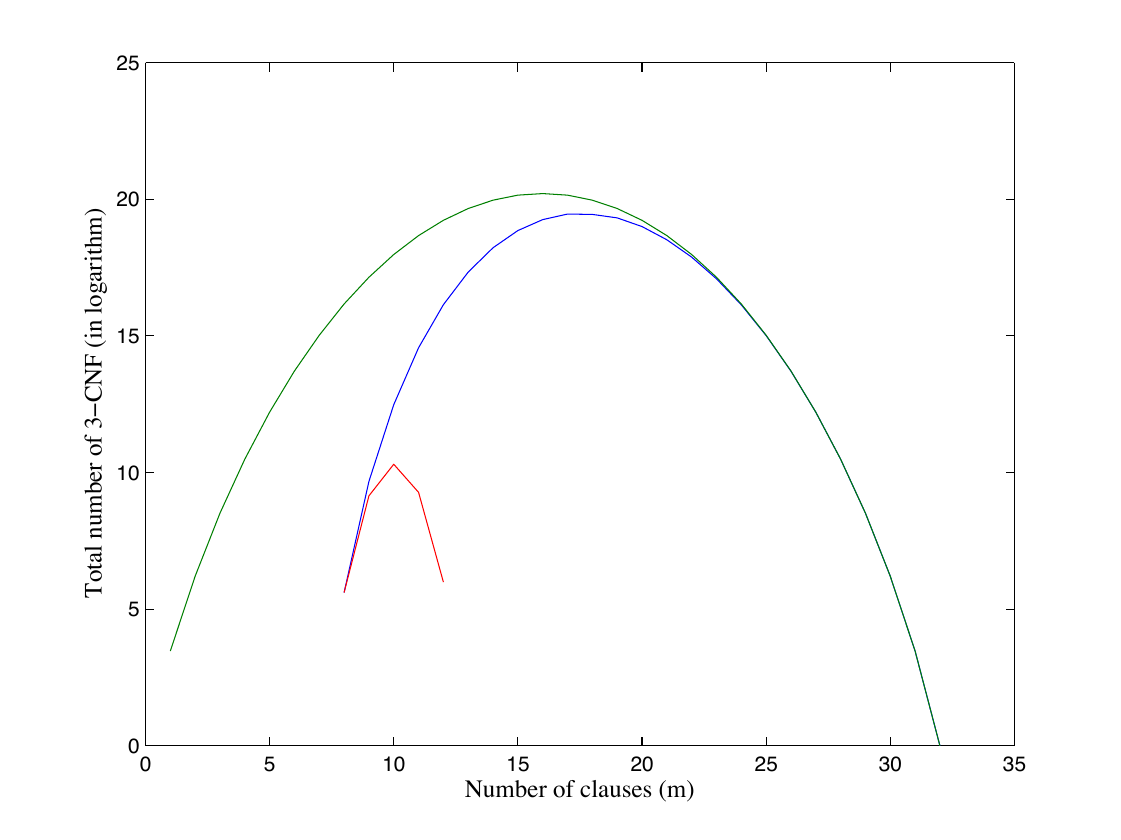}}
\caption{Logarithmic scale : the upper curve represents the total
  number of all possible 3-CNF in $\Phi_{4,m}$;
 the second one, the total of
  non-satisfiable 3-CNF, {\it i.e.} $\# \{\overline{\cal S}_{4,m} \}$,
and the lower one, the total of
{\it Irreducible Non-Satisfiable} 3-CNF, {\it i.e.} $\# \{\overline{\cal S}_{4,m}^{INS} \}$ (i.e. 3-CNF Satisfiable with m-1 clauses).}
\label{figure_4_m}
\end{figure}  
%
%
\noindent So, one can now calculate $T_{3-CNF-SAT, \; t_0+\Delta t}(n)$ : it is the time
required to check whether a specific 3-CNF formula belongs or not in
$\overline{\cal S}_{n,m}$, after $\Delta t$ large enough for the entire set $\overline{\cal S}_{n,m}$ to be computed. If
one can allocate an exponential space for memory to save the
elements of $\overline{\cal S}_{n,m}$ (as accepted in Turing machines), then a hash algorithm,
based on the {\it clause's signatures}, can be used to see whether a 3-CNF formula $\varphi$ belongs or not
to the set $\overline{\cal S}_{n,m}$.  For instance, one can use $u_i$, the {\it $i^{th}$ ordered signature of clauses}, as the $i^{th}$ successive hash function $h_i(\varphi)$. It
takes ${\cal O}(2 n)$ operations to compute each
of these $m$
{\it clause's signatures} of $\varphi$ and ${\cal O}(m \log m)$ computations to
sort them.  We need then ${\cal O}(2^3 \times C_3^n)$ operations,
which corresponds to the maximum number of possible values for the signatures, to find whether the signature
belongs or not to the corresponding section of $\overline{\cal S}_{n,m}$ where the formulae are also ordered, in a lexical ordering, following their clause's signatures.  Using
equation (\ref{no0}) [{\it i.e.} ${\cal O}(m) = {\cal O}(n^3)$], 
\begin{eqnarray}
T_{3-CNF-SAT, \;
  t_0+\Delta t}(n)&=&{\cal O}(m(2n) + (m \log m) + m(2^3
C_3^n)) \nonumber \\
&=&{\cal O}(m^2) = {\cal O}(n^k)  \;\;\; \mbox{ for some }
k \in \nit
\label{no4}
\end{eqnarray}  
%
\subsection{The ``unprovability'' of  \classP $\!\!
\neq \!\!$ \classNP}
\begin{theorem}  It is impossible to prove that \classP $\!\!
\neq \!\!$ \classNP in the deterministic or time independent framework of Mathematics. 
\end{theorem}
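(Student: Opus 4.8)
The plan is to reduce the assertion to the complexity computation already carried out for the 3-CNF-SAT problem and then to close the argument with the meta-principle of temporal reproducibility. First I would make precise what any proof of \classP $\neq$ \classNP must deliver: exhibiting some \classNP problem whose worst-case complexity measure $T_M(n)$ stays supra-polynomial for \emph{every} admissible machine $M$. The 3-CNF-SAT problem is the natural witness, being \classNP-complete. In the notation of the definition of \classP$\!$, proving \classP $\neq$ \classNP amounts to establishing that there is \emph{no} $k \in \nit$ with $T_M(n) \leq n^k + k$ for all $n$, and that this failure persists uniformly. So the task reduces to denying the existence of such a uniform polynomial bound for 3-CNF-SAT.

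Next I would invoke the key fact already derived in equation (\ref{no4}). After the finite — though possibly enormous — delay $\Delta t$ needed to precompute and store the set $\overline{\cal S}_{n,m}$ of non-satisfiable formulae, the hash-based decision procedure on the clause signatures $u_i$ achieves $T_{3-CNF-SAT, \; t_0+\Delta t}(n) = {\cal O}(n^k)$. This is the decisive observation: at that particular instant the 3-CNF-SAT problem is decided in polynomial time, and the finiteness of $\Delta t$ (guaranteed because $\Phi_{n,m}$ is a finite set, by equation (\ref{no1})) makes $t_0 + \Delta t$ a genuine, attainable moment rather than an unreachable limit.

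The argument is then completed by the meta-principle that mathematical statements must be reproducible and hence a-temporal: the framework of Mathematics admits no privileged time coordinate, so a property that holds at one instant must hold at every instant. Applying this to the polynomial bound obtained at $t_0 + \Delta t$, I would conclude that the value ${\cal O}(n^k)$ cannot be quarantined within that single epoch — once true, it is true without reference to time. Consequently the complexity of 3-CNF-SAT is not \emph{always} supra-polynomial, which is precisely what a proof of \classP $\neq$ \classNP would have to guarantee; the impossibility follows.

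The hard part, and the step I would state with the greatest care, is the passage from the time-dependent complexity of the safe problem, $T_{safe, \; t_0}(n) = 2^n$ versus $T_{safe, \; t_0 + \Delta t}(n) = 1$, to the time-independent conclusion. One must argue convincingly that accepting the a-temporality of mathematical truth forces any putative proof to ``see'' the polynomial value realized at $t_0 + \Delta t$, thereby forbidding the uniform supra-polynomial lower bound that \classP $\neq$ \classNP demands. That the analogy with the safe is faithful — that the disappearance of the exponential cost after $\Delta t$ is a reduction of genuine algorithmic complexity and not merely a change of bookkeeping — is the load-bearing claim on which the whole scheme rests.
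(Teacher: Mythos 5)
Your proposal follows essentially the same route as the paper's own proof: both rest on equation (\ref{no4}) giving a polynomial-time decision procedure at $t_0+\Delta t$ (after finite precomputation of $\overline{\cal S}_{n,m}$), both invoke the a-temporality of mathematical truth to rule out any time-independent supra-polynomial lower bound, and both extend the conclusion from 3-CNF-SAT to all of \classNP via completeness. The only cosmetic difference is that the paper packages the argument as the formal indistinguishability of the two functions $\Xi'$ (at $t_0$) and $\Xi''$ (at $t_0+\Delta t$), whereas you propagate the polynomial bound itself across time; these are the same move, and you even flag explicitly the load-bearing assumption (that the post-$\Delta t$ speedup is a genuine reduction of complexity rather than bookkeeping) on which the paper's proof silently relies.
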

\begin{proof}
The solution of the 3-CNF-SAT problem
is equivalent to the setting of these two functions $\Xi'$ and $\Xi"$ : 
\begin{eqnarray}
\mbox{(In $t_0$) \hspace{0.3cm}} & \Xi'  : & \Phi_{n,m}  
\stackrel{{\cal O}(?)}{\longrightarrow} \{0,1\} \mbox{\it \hspace{1cm}
  (the construction of $\overline{\cal S}_{n,m}$)} \nonumber \\ 
& & \hspace{0.3cm} \varphi \hspace{0.4cm} \leadsto \hspace{0.4cm} 0
\mbox{\hspace{0.4cm}  if } \varphi \in \overline{\cal S}_{n,m} \mbox{ and $1$
  otherwise }
\label{no5} \\
\mbox{(In $t_0 + \Delta t$) \hspace{0.3cm}} & \Xi''  : & \Phi_{n,m}  
\stackrel{{\cal O}(n^k)}{\longrightarrow} \{0,1\} \mbox{\it \hspace{1cm}
  ($\varphi \stackrel{?}{\in} \overline{\cal S}_{n,m}$ when
  $\overline{\cal S}_{n,m}$ is known)} \nonumber \\
& & \hspace{0.3cm} \varphi \hspace{0.4cm} \leadsto \hspace{0.4cm} 0
\mbox{\hspace{0.4cm}  if } \varphi \in \overline{\cal S}_{n,m} \mbox{ and $1$
  otherwise }
\label{no6} 
\end{eqnarray}
\noindent The {\it meta mathematical argument}
lies in the fact that any operation done by
$\Xi'$ in $t_0$ can be reduced to a polynomial time operation by $\Xi''$ in $t_0 +
\Delta t$ \footnote{To make it easier to understand, let us think of
the version of 3-CNF-SAT with $n = 4$ : it took us several months
to build  $\overline{\cal S}_{n,m}$, but now it only takes
seconds to solve the 3-CNF-SAT problem with $4$ variables.  And this
is done forever.  A similar reasoning can be done for the
$i^{th}$ decimal of $\pi$, or for the
list of the $n$ first prime numbers.}. \\[12pt]
\noindent Mathematically speaking, it is impossible to make a formal or mathematical
distinction between
both functions $\Xi'$ and $\Xi"$, as time does not interfere with proofs in
mathematics.
More precisely, if someone proves that the 3-CNF-SAT problem $\Xi$ (or $\Xi'$) is non polynomial,
this assertion, as well as the steps for the demonstration, should be true at any time, independently of $t$, even in $t_0
+ \Delta t$.  The proof could not introduce time in the demonstration.
But people will only be able to proof the non polynomial nature of
3-CNF-SAT for time $t_0$, certainly not for time $t_0 + \Delta t$ as
shown in equation (\ref{no4}).  And this argument holds for all \classNP problems because all of them are equivalent, in term of complexity, to the 3-CNF-SAT problem.
\end{proof}
This is exactly the same situation as with the safe problem : the complexity
measure of the problem is changing over time, becoming polynomial after some
large $\Delta t$.  But the \classP - \classNP question does not consider time as far as complexity is concerned : if we do not consider the time dependent nature of complexity, one should conclude that  \classP = \classNP.  
\mbox{}\\[-3pt]
\section{Conclusions}
\mbox{}\\[6pt]
\noindent This paper tries to show that the \classP $\!\!
\neq \!\!$ \classNP 
problem is impossible to solve within the time independent framework 
of Mathematics, as 
\classP $\!\! \neq \!\!$ \classNP can be proved without reference to time.
 The key concept of the paper is the temporal nature of the complexity measure
 for the \classNP$\!\!-$hard problems.  This time dependence is closely
 related to some (pseudo) randomness in the heart of these problems. Some
 analogy can be found with the Chaos theory, when pseudo randomness arises
 from deterministic processes. \\[12pt]
\noindent For the author, \classNP  is really different from \classP but the difference lies in the distinction between true randomness and mathematical pseudo-randomness, and this frontier is situated on the limit border of Mathematics (which is deterministic). \\[12pt]
\noindent The impossibility to prove that \classP $\neq$ \classNP gives a new perspective on the
{\it pseudo non deterministic (or random)} nature of the most difficult 
problems,
the \classNP$\!\!-$hard problems~: we can see these problems as so
inextricable that we are in front of them like someone facing some random
search problem (as the safe problem), even if they are deterministic (not
random) in their very essential nature, {\it i.e.} as quasi chaotic problems.
 \\[12pt]
Therefore, the   \classP $\!\! \neq
\!\!$ \classNP {\it ``unprovability''} can be seen as the expression of the
 incapa\-city for Mathematics to give a time independent definition of
 randomness. \\[12pt]

\nocite{Sipser92}
\nocite{Sanjeev09}

\addcontentsline{toc}{part}{\mbox{Bibliography}}  
\bibliographystyle{plain}   
\bibliography{mybib}

\end{document}